\documentclass[runningheads]{llncs}

\usepackage{amsmath, amsfonts, amssymb, bm}
\usepackage{graphicx}
\usepackage{booktabs}
\usepackage{mathtools}
\usepackage{hyperref}
\usepackage{algorithm}
\usepackage{algorithmic}
\usepackage{physics}
\usepackage{dsfont}
\usepackage{enumitem}
\usepackage{subcaption}
\usepackage{xcolor}

\newtheorem{assumption}{Assumption}
\DeclareMathOperator{\conv}{conv}

\hypersetup{
    colorlinks=true,
    linkcolor=blue,
    citecolor=blue,
    urlcolor=blue
}
\usepackage[margin=1in]{geometry}
\begin{document}

\title{On the Expressive Power and Capacity of Quantum Data Reuploaders}

\titlerunning{Expressivity and Capacity of Data Reuploaders}

\author{
  Lord Sen \and
  Shyamapada Mukherjee
}

\authorrunning{Sen and Mukherjee}

\institute{
  National Institute of Technology, Rourkela, Rourkela, India \\
  \email{lordsen3008@gmail.com}
  \email{mukherjees@nitrkl.ac.in}
}

\maketitle

\begin{abstract}
Data reuploading---repeatedly encoding classical inputs at multiple layers of a
parameterized quantum circuit---is a central mechanism for enhancing the
expressivity of quantum neural networks (QNNs). Building on the Fourier-theoretic
framework of Schuld et al.~\cite{schuld2021fourier}, we study the frequency
spectrum of the general multivariate, $R$-reupload architecture (QDR).
(i) We make explicit that the accessible frequencies form the set
$K_R=S^{(R)}-S^{(R)}$, which equals the $R$-fold sumset $T^{(R)}$ of the
difference set $T=S-S$ of the encoding spectrum, and we characterise it exactly:
$K_R=\{k\in\mathcal L:\ \ell_T(k)\le R\}$, where $\mathcal L=\mathrm{span}_{\mathbb Z}(T)$
is an integer lattice and $\ell_T$ is word length in $T$.
(ii) We show that the \emph{directions} of the spectrum stabilise already at $R=1$
(the real span of $K_R$ equals that of $\mathcal L$), whereas the spectrum itself
never saturates: the maximal frequency grows linearly in $R$ and the number of
frequencies grows polynomially, as $\Theta(R^{m})$ with $m=\mathrm{rank}\,\mathcal L$.
(iii) Using the spectrum size, we give a Rademacher-complexity bound on the
capacity of QDRs of order $\sqrt{|K_R|/n}$.
(iv) We report experiments on function approximation and classification with
train/test splits, several seeds and parameter-matched classical baselines. The
simulated spectra match the theory exactly. A single-qubit QDR beats a matched MLP on
several smooth or periodic univariate targets and on the spiral task, but it is
clearly worse on multivariate regression ($d\ge4$), where it is
highly sensitive to the input-encoding scale; it beats the MLP only on equations with $d\le3$ inputs.
Code: \url{https://github.com/Lord3008/QDRs}
\end{abstract}

\keywords{Quantum data reuploading \and Fourier series \and Frequency lattice \and Capacity}

\section{Introduction}

Quantum computing and circuits have found applications in many
areas~\cite{Shafique,sen2025symmetric,Nakata,sen2025quantum}, and parameterized
quantum circuits (PQCs) are central to near-term quantum machine
learning~\cite{SchuldReview,Cincio}. While quantum states inhabit exponentially
large Hilbert spaces, current hardware is constrained by noise, circuit depth
and connectivity. A popular strategy in this regime is \emph{data reuploading}:
repeatedly encoding classical inputs into a circuit to generate nonlinear
features. A general Quantum Data Reuploader (QDR)~\cite{PerezSalinas,Wang2025}
with $R$ encoding gates has the form
\[
U(\theta, x) = W_{R+1}(\theta_{R+1})\, V(x)\, W_{R}(\theta_{R}) \cdots V(x)\, W_1(\theta_1),
\]
where the $W_i$ are trainable unitaries and $V(x)$ encodes the
data, e.g.\ as a product $V(x)=\prod_j e^{-ix_jH_j}$ or, for commuting generators, as $e^{-i\sum_j x_jH_j}$ (see Section~\ref{subsec:fourier}). The model output is
\[
f_\theta(x) = \langle 0^{\otimes n} | U^\dagger(\theta,x) M U(\theta,x) | 0^{\otimes n} \rangle .
\]

Schuld et al.~\cite{schuld2021fourier} showed that such models are trigonometric
polynomials whose frequency content is fixed by the eigenvalues of the encoding
Hamiltonians and the number of encodings. A spectrum that generates a
full-rank lattice is \emph{necessary} for a model to represent arbitrary
$d$-dimensional Fourier structure (Section~\ref{subsec:structure}); it is not by itself sufficient, since
the trainable blocks must also be able to realise the coefficients. Asymptotic universality
of single-qubit reuploading models is established
in~\cite{PerezSalinasOne,Yu2022}. The frequency picture has been developed
further in, e.g., \cite{Casas2023,Barthe2023}. Universality is therefore an
\emph{asymptotic} statement in $R$; at finite depth the capacity is limited by the
spectrum $K_R$ and by the number of trainable parameters, which is the regime our
experiments and results address.

\paragraph{Contributions.}
\begin{enumerate}
    \item We restate the Fourier representation for the general multivariate,
    $R$-reupload architecture in a single explicit lemma (Section~\ref{subsec:fourier}).
    This is a direct adaptation of~\cite{schuld2021fourier} (their Sec.~II and
    Appendix~A) and is not claimed as new theory; it fixes notation.
    \item We give an exact description of the spectrum (Theorem~\ref{thm:structure}):
    $K_R$ is the ball of radius $R$ in the word metric of $T$ on the lattice $\mathcal L$.
    Consequences: (a) the spectrum's directions are fixed at $R=1$;
    (b) every fixed frequency in $\mathcal L$ is reached at a finite, exactly computable depth;
    (c) the spectrum never saturates and has polynomial growth $\Theta(R^m)$.
    These statements are elementary consequences of sumset structure, and we do not
    claim priority over the frequency-spectrum analyses in the literature
    (e.g.~\cite{schuld2021fourier,Casas2023,Barthe2023}); we make them explicit
    because they govern the finite-depth behaviour of QDRs.
    \item A capacity bound in terms of $|K_R|$ (Proposition~\ref{prop:capacity})
    and a discussion of how parameter count and coefficient constraints interact with
    the spectrum (Section~\ref{subsec:coeff}).
    \item Experiments on function approximation and classification
    (Section~\ref{sec:exp}) with train/test splits, multiple seeds, parameter-matched
    MLP baselines, a numerical verification of the predicted spectra, and an explicit account of
    what they do and do not show.
\end{enumerate}

\section{Theoretical Analysis of QDRs}

\subsection{Fourier Representation of Data Reuploading Models}
\label{subsec:fourier}

We consider the $R$-reupload model
\begin{equation}
    U(\theta,x)
    = W_{R+1} V(x) W_{R} V(x) \cdots W_2 V(x) W_1,
    \label{eq:U}
\end{equation}
containing $R$ encoding gates and $R+1$ trainable blocks, with
$f_\theta(x)=\langle 0^{\otimes n}|U^\dagger M U|0^{\otimes n}\rangle$ for a
Hermitian observable $M$.

The only structural property of the encoding we need is the following.

\begin{assumption}[Finite phase decomposition]
\label{ass:decomp}
There is a finite set $S\subset\mathbb R^d$ and $x$-independent operators
$\{P_s\}_{s\in S}$ with
\begin{equation}
    V(x)=\sum_{s\in S} e^{-i x\cdot s}\,P_s .
    \label{eq:Vdecomp}
\end{equation}
\end{assumption}

This holds in the two standard settings:
\begin{itemize}
    \item \emph{Commuting generators.} If $[H_i,H_j]=0$ for all $i,j$ (e.g.\ Pauli-$Z$
    type encodings, including several on the same qubit), the $H_j$ have joint spectral
    projectors $\Pi_s$; take $P_s=\Pi_s$ and $S$ the \emph{joint spectrum}
    $\{s:\Pi_s\neq0\}$. Note that $S$ can be a proper subset of
    $\sigma(H_1)\times\dots\times\sigma(H_d)$ (Example~\ref{ex:joint}).
    \item \emph{Product encoding.} For $V(x)=\prod_{j=1}^d e^{-ix_jH_j}$, expanding each
    factor spectrally gives~\eqref{eq:Vdecomp} with $P_s=\Pi_{1,a_1}\cdots\Pi_{d,a_d}$
    (an ordered product; not a projector in general) and $S\subseteq\prod_j\sigma(H_j)$.
    Commutation is not needed.
\end{itemize}
If the $H_j$ do not commute, the ``sum'' encoding $e^{-i\sum_j x_jH_j}$ is in general
not a finite trigonometric polynomial in $x$, and Assumption~\ref{ass:decomp} may fail.
Write $\Lambda_{\max}:=\max_{s\in S}\|s\|_\infty$.

\begin{lemma}[Fourier representation, adapted from~\cite{schuld2021fourier}]
\label{lem:fourier}
Under Assumption~\ref{ass:decomp}, the output of the $R$-reupload model~\eqref{eq:U} is a
finite Fourier series
\begin{equation}
    f_\theta(x)
    = \sum_{k \in K_R} c_k(\theta)\, e^{i k \cdot x},
    \qquad
    K_R \;:=\; S^{(R)} - S^{(R)},
    \label{eq:KR}
\end{equation}
where $S^{(R)} := \{s_1+\cdots+s_R \mid s_i\in S\}$ is the $R$-fold Minkowski sum.
Moreover $c_{-k}=\overline{c_k}$ (so $f_\theta$ is real), and
\begin{equation}
    \| k \|_\infty \;\le\; 2R \Lambda_{\max}
    \qquad \text{for every } k\in K_R .
    \label{eq:bound}
\end{equation}
Thus $K_R$ is the set of frequencies the architecture \emph{may} carry; whether a
given $c_k(\theta)$ can be non-zero depends on the trainable blocks.
\end{lemma}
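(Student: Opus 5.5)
The plan is to substitute the phase decomposition of Assumption~\ref{ass:decomp} into~\eqref{eq:U} and expand multilinearly. Each of the $R$ occurrences of $V(x)$ is replaced by $\sum_{s\in S} e^{-ix\cdot s}P_s$, and the sums are distributed over the product. This gives
\[
U(\theta,x)=\sum_{\mathbf s=(s_1,\dots,s_R)\in S^R} e^{-i x\cdot(s_1+\cdots+s_R)}\,A_{\mathbf s}(\theta),
\qquad
A_{\mathbf s}:=W_{R+1}P_{s_R}W_R\cdots P_{s_1}W_1 .
\]
The operators $A_{\mathbf s}$ do not depend on $x$. Grouping multi-indices by their sum $\sigma=s_1+\cdots+s_R\in S^{(R)}$ yields $U=\sum_{\sigma\in S^{(R)}}e^{-ix\cdot\sigma}B_\sigma(\theta)$, where $B_\sigma=\sum_{\mathbf s:\sum s_i=\sigma}A_{\mathbf s}$. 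No commutation of the $P_s$ with the $W_i$ or with each other is used: the order of the operator product is kept intact, and only the scalar phases are collected. This is why the product-encoding case, with non-projector $P_s$, is covered as well.

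Next I will take the adjoint, $U^\dagger=\sum_{\sigma'}e^{+ix\cdot\sigma'}B_{\sigma'}^\dagger$, and insert both expansions into $f_\theta(x)=\langle 0|U^\dagger M U|0\rangle$. This gives
\[
f_\theta(x)=\sum_{\sigma,\sigma'\in S^{(R)}} e^{ix\cdot(\sigma'-\sigma)}\,\langle 0|B_{\sigma'}^\dagger M B_\sigma|0\rangle .
\]
Collecting terms with $\sigma'-\sigma=k$ defines
\[
c_k(\theta):=\sum_{\sigma'-\sigma=k}\langle 0|B_{\sigma'}^\dagger M B_\sigma|0\rangle ,
\]
and $k$ ranges exactly over $S^{(R)}-S^{(R)}=K_R$. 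The sums are finite because $S$ is finite, so $f_\theta$ is a finite Fourier series.

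For the conjugate symmetry, I will swap $\sigma\leftrightarrow\sigma'$ in the definition of $c_{-k}$ and use the Hermiticity of $M$:
\[
\overline{\langle 0|B_{\sigma'}^\dagger MB_\sigma|0\rangle}=\langle 0|B_\sigma^\dagger MB_{\sigma'}|0\rangle .
\]
This gives $c_{-k}=\overline{c_k}$, and hence $f_\theta$ is real. The set $K_R$ is symmetric under $k\mapsto -k$, so the pairing is well defined.

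Finally, for~\eqref{eq:bound}, I will write any $k\in K_R$ as $k=\sum_{i=1}^R s_i'-\sum_{i=1}^R s_i$ with all $s_i,s_i'\in S$. The triangle inequality for $\|\cdot\|_\infty$ then gives $\|k\|_\infty\le 2R\Lambda_{\max}$.

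The argument is essentially bookkeeping. The only point requiring care is the first expansion, where one must keep the ordering of the non-commuting operators while factoring out the scalar exponentials. I would make that step explicit, perhaps by induction on $R$: if $U_{R-1}$ has the form $\sum_\sigma e^{-ix\cdot\sigma}B^{(R-1)}_\sigma$, then
\[
U_R=W_{R+1}V(x)\,U_{R-1}=\sum_{s,\sigma}e^{-ix\cdot(s+\sigma)}\,W_{R+1}P_sB^{(R-1)}_\sigma ,
\]
which has the same form with $S^{(R)}=S+S^{(R-1)}$.
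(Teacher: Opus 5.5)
Your proposal is correct and follows essentially the same route as the paper: substitute the phase decomposition into each $V(x)$ while keeping the operator ordering, take the adjoint, group terms by $k=\sigma'-\sigma\in S^{(R)}-S^{(R)}$, and get the bound from the triangle inequality. The only small difference is the symmetry step. You obtain $c_{-k}=\overline{c_k}$ directly by relabelling $\sigma\leftrightarrow\sigma'$ and using $M=M^\dagger$, whereas the paper deduces it from the reality of $f_\theta$ plus uniqueness of Fourier coefficients. Your version is self-contained and matches the lemma's ``so $f_\theta$ is real'' direction.
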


\begin{proof}
Insert~\eqref{eq:Vdecomp} into each of the $R$ occurrences of $V(x)$ in~\eqref{eq:U}:
\begin{equation}
    U(\theta,x) = \sum_{\vec a \in S^{R}}
        e^{-i x\cdot \Lambda(\vec a)}\, B_{\vec a}(\theta),
    \qquad
    \Lambda(\vec a) := \sum_{t=1}^R a_t \in S^{(R)},
    \label{eq:Uexp}
\end{equation}
with $B_{\vec a}(\theta) := W_{R+1} P_{a_R} W_{R}P_{a_{R-1}}\cdots P_{a_1} W_1$
independent of $x$. Taking the adjoint of~\eqref{eq:Uexp},
$U^\dagger=\sum_{\vec b}e^{+ix\cdot\Lambda(\vec b)}B_{\vec b}^\dagger$; the phases of
$U^\dagger$ have the opposite sign to those of $U$, which is the origin of the
difference in~\eqref{eq:KR}. Hence
\begin{equation}
    f_\theta(x)
    = \sum_{\vec a,\vec b \in S^{R}}
        e^{\,i x\cdot(\Lambda(\vec b)-\Lambda(\vec a))}\,
        \langle 0^{\otimes n}| B_{\vec b}^\dagger\, M\, B_{\vec a} |0^{\otimes n}\rangle .
\end{equation}
Grouping terms by $k:=\Lambda(\vec b)-\Lambda(\vec a)$ gives~\eqref{eq:KR} with
\[
c_k(\theta) := \sum_{\Lambda(\vec b)-\Lambda(\vec a)=k}
\langle 0^{\otimes n}| B_{\vec b}^\dagger M B_{\vec a}|0^{\otimes n}\rangle ,
\]
and, since $\Lambda(\vec a),\Lambda(\vec b)$ range independently over $S^{(R)}$, $k$
ranges over $S^{(R)}-S^{(R)}$. Reality of $f_\theta$ ($M$ Hermitian) together with
uniqueness of Fourier coefficients gives $c_{-k}=\overline{c_k}$. For the bound,
$\|\Lambda(\vec a)\|_\infty\le R\Lambda_{\max}$ and likewise for $\vec b$, so
$\|k\|_\infty\le 2R\Lambda_{\max}$.
\end{proof}

\begin{remark}[Relation to~\cite{schuld2021fourier}]
Lemma~\ref{lem:fourier} is not a new result. For $d=1$ it recovers the
sequential-reuploading spectrum of \cite[Eqs.~(10),(12)]{schuld2021fourier}; for $R=1$
and one Hamiltonian per variable it recovers their multivariate single-layer result
\cite[Eq.~(A10)]{schuld2021fourier}. The lemma records that these compose, and fixes the
frequency set $K_R$ analysed below.
\end{remark}

\subsection{Structure of the Spectrum}
\label{subsec:structure}

\begin{assumption}[Commensurate spectra]
\label{ass:commensurate}
$S\subset\mathbb Z^d$ (equivalently, $S\subset\mathbb Q^d$ after an overall rescaling
of $x$). This holds for Pauli-type encodings (eigenvalues $\pm1$) and any rational spectrum.
\end{assumption}

\begin{remark}
Without Assumption~\ref{ass:commensurate} the group $\mathcal L$ below can be
non-discrete: for $S=\{0,1,\sqrt2\}\subset\mathbb R$, $\mathcal L=\mathbb Z+\sqrt2\mathbb Z$ is dense in $\mathbb R$
and has $\mathbb Z$-rank $2>d=1$. Then $\mathcal L\cap B_\infty(\rho)$ is infinite, and
statements such as $\mathrm{rank}\,\mathcal L\le d$ and finiteness of windows fail.
We therefore assume Assumption~\ref{ass:commensurate} from here on.
\end{remark}

Define the \emph{difference set} $T:=S-S\subset\mathbb Z^d$, so $0\in T$ and $T=-T$, and
\[
\Delta:=\max_{t\in T}\|t\|_\infty\ \ (\le 2\Lambda_{\max}),\qquad
\mathcal L:=\mathrm{span}_{\mathbb Z}(T),\qquad m:=\rank\mathcal L\le d .
\]
For $k\in\mathcal L$ let $\ell_T(k)$ be the least number of elements of $T$ summing to $k$
(the \emph{word length} of $k$ with respect to $T$; finite because $T$ generates $\mathcal L$
and $T=-T$).

\begin{lemma}[$K_R$ is the $R$-fold sumset of $T$]
\label{lem:KR=TR}
For every $R\ge 1$, $K_R=S^{(R)}-S^{(R)}=T^{(R)}:=\{t_1+\cdots+t_R \mid t_i\in T\}$.
Hence $K_R\subseteq K_{R+1}$, each $K_R$ is finite and symmetric, and
\begin{equation}
    K_R=\{k\in\mathcal L:\ \ell_T(k)\le R\}.
    \label{eq:wordball}
\end{equation}
\end{lemma}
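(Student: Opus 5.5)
The identity $S^{(R)}-S^{(R)}=T^{(R)}$ comes from regrouping sums, since Minkowski addition is commutative and associative. Take $k=\sum_{i=1}^R s_i-\sum_{i=1}^R s_i'\in S^{(R)}-S^{(R)}$ and pair the terms as $k=\sum_{i=1}^R (s_i-s_i')$. Each summand lies in $T=S-S$, so $k\in T^{(R)}$. Conversely, if $k=\sum_i t_i$ with $t_i=s_i-s_i'$, then splitting the sum gives $k=\sum_i s_i-\sum_i s_i'$, which lies in $S^{(R)}-S^{(R)}$. Combined with Lemma~\ref{lem:fourier}, this gives $K_R=T^{(R)}$.

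The listed consequences then follow directly.
\begin{itemize}
\item \emph{Nesting.} Since $0\in T$, any $t_1+\cdots+t_R\in T^{(R)}$ can be padded with $t_{R+1}=0$, so $K_R\subseteq K_{R+1}$.
\item \emph{Finiteness.} $T$ is finite, so $|T^{(R)}|\le |T|^R$.
\item \emph{Symmetry.} $T=-T$, so negating each summand gives $-T^{(R)}=T^{(R)}$.
\end{itemize}

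For the word-ball description~\eqref{eq:wordball}, I will prove both inclusions.
\begin{itemize}
\item If $k\in T^{(R)}$, then $k$ is a sum of elements of $T$, so $k\in\mathcal L$. It is a sum of at most $R$ such elements, so $\ell_T(k)\le R$ by the definition of word length as a minimum.
\item If $k\in\mathcal L$ with $\ell_T(k)=j\le R$, write $k=t_1+\cdots+t_j$ and pad with $R-j$ copies of $0\in T$. This exhibits $k\in T^{(R)}$. The case $k=0$, where $\ell_T(0)=0$ via the empty sum, is covered by taking all $t_i=0$.
\end{itemize}

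One subtlety needs attention. The paper asserts that $\ell_T$ is finite on $\mathcal L$, which requires every element of $\mathrm{span}_{\mathbb Z}(T)$ to be a nonnegative-integer combination of elements of $T$. This holds because $T=-T$: a term $-t$ with $t\in T$ can be replaced by the element $-t\in T$, so integer coefficients can be made nonnegative. I expect no real obstacle. The only care needed is the padding convention with $0\in T$, which makes "at most $R$ summands" and "exactly $R$ summands" coincide.
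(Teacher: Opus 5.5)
Your proof is correct and follows the paper's argument. Like the paper, you regroup $\sum_i s_i-\sum_i s_i'$ as $\sum_i(s_i-s_i')$, pad with zeros using $0\in T$ to get nesting and the word-ball description, and use $T=-T$ for symmetry. Your extra remark on finiteness of $\ell_T$ is fine but not needed for the set identity, since any $k$ with infinite word length would simply fall outside the right-hand side.
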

\begin{proof}
If $k=u-v$ with $u=\sum_i s_i$, $v=\sum_i s_i'$ then $k=\sum_i(s_i-s_i')\in T^{(R)}$;
conversely $\sum_i(s_i-s_i')=\sum_i s_i-\sum_i s_i'$. Since $0\in T$, any sum of $r\le R$
elements of $T$ can be padded with zeros to exactly $R$ terms, so $K_R\subseteq K_{R+1}$
and~\eqref{eq:wordball} follows; symmetry follows from $T=-T$.
\end{proof}

Every reachable frequency lies in $\mathcal L=\mathrm{span}_{\mathbb Z}(S-S)$, and not in
the possibly larger $\mathrm{span}_{\mathbb Z}(S)$. For $S=\{\pm1\}$, $\mathrm{span}_{\mathbb Z}(S)=\mathbb Z$,
but no $K_R$ contains an odd integer.

\begin{theorem}[Structure of the spectrum of a QDR]
\label{thm:structure}
Under Assumption~\ref{ass:commensurate}, with $T,\Delta,\mathcal L,m,\ell_T$ as above:
\begin{enumerate}
    \item \textbf{(Exhaustion and exact reach depth.)} $\bigcup_R K_R=\mathcal L$. A frequency
    $k\in\mathcal L$ belongs to $K_R$ if and only if $R\ge\ell_T(k)$. For every $\rho>0$,
    \[
    K_R\supseteq\mathcal L\cap B_\infty(\rho)\iff R\ge R_{\rm win}(\rho):=\max_{k\in\mathcal L\cap B_\infty(\rho)}\ell_T(k),
    \]
    and $R_{\rm win}(\rho)\ge\lfloor\rho/\Delta\rfloor$.
    \item \textbf{(Confinement.)} $K_R\subseteq\mathcal L\cap R\cdot\conv(T)\subseteq\mathcal L\cap B_\infty(R\Delta)\subseteq \mathcal L\cap B_\infty(2R\Lambda_{\max})$.
    \item \textbf{(Directions stabilise at $R=1$.)} For every $R\ge1$,
    $\mathrm{span}_{\mathbb R}(K_R)=\mathrm{span}_{\mathbb R}(\mathcal L)=\mathrm{span}_{\mathbb R}(T)=:V$, a subspace of dimension $m$.
    Consequently $f_\theta(x+v)=f_\theta(x)$ for all $v\in V^\perp$: if $m<d$ the model depends
    on $x$ only through its projection onto $V$, for every depth.
    \item \textbf{(No saturation of the spectrum.)} If $T\ne\{0\}$, then $K_R\subsetneq K_{R+1}$ for
    every $R$; in particular $\max_{k\in K_R}\|k\|_\infty$ grows linearly in $R$ and $|K_R|\to\infty$.
    Moreover, by Khovanskii's theorem on sumsets~\cite{Khovanskii,Nathanson}, $|K_R|$ agrees for large $R$
    with a polynomial of degree $m$ in $R$, with leading coefficient $\mathrm{vol}_m(\conv T)/\mathrm{covol}(\mathcal L)$
    (cited, not reproved here); so $|K_R|=\Theta(R^m)$.
\end{enumerate}
\end{theorem}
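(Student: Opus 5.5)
The plan is to derive all four parts from Lemma~\ref{lem:KR=TR}, which identifies $K_R$ with the word-metric ball $\{k\in\mathcal L:\ell_T(k)\le R\}$, together with elementary convexity and one cited sumset theorem.

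\emph{Part 1.} Every $k\in\mathcal L$ has finite $\ell_T(k)$, because $T$ generates $\mathcal L$ and $T=-T$. Hence $\bigcup_R K_R=\mathcal L$. The equivalence $k\in K_R\iff R\ge\ell_T(k)$ is a restatement of~\eqref{eq:wordball}. For the window statement I note two facts. First, $\mathcal L\cap B_\infty(\rho)$ is finite, since $\mathcal L\subseteq\mathbb Z^d$. Second, $K_R$ contains the window iff it contains each of its points, so iff $R$ is at least the maximum of $\ell_T$ over the window. For the lower bound $R_{\rm win}(\rho)\ge\lfloor\rho/\Delta\rfloor$, I need a lattice point $k$ in the window with $\|k\|_\infty>(\lfloor\rho/\Delta\rfloor-1)\Delta$. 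A natural candidate is $k=\lfloor\rho/\Delta\rfloor\,t^*$, where $t^*\in T$ attains $\|t^*\|_\infty=\Delta$. Then $\|k\|_\infty\le\rho$, and by Part~2 any word for $k$ has length at least $\|k\|_\infty/\Delta=\lfloor\rho/\Delta\rfloor$. If $T=\{0\}$, then $\Delta=0$ and the claim is vacuous or conventional; I will treat that case separately.

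\emph{Part 2.} A sum of $R$ elements of $T$ lies in $R\cdot\conv(T)$, because $\sum t_i=R\cdot\frac1R\sum t_i$. The inclusion $\conv(T)\subseteq B_\infty(\Delta)$ holds by convexity of the ball, and $\Delta\le2\Lambda_{\max}$ holds by the triangle inequality.

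\emph{Part 3.} Since $T\subseteq K_1\subseteq K_R\subseteq\mathcal L=\mathrm{span}_{\mathbb Z}T$, the real spans are sandwiched between $\mathrm{span}_{\mathbb R}T$ and itself, so they coincide. The dimension equals $m$ because a rank-$m$ subgroup of $\mathbb Z^d$ has a $\mathbb Z$-basis that is $\mathbb Q$-linearly, hence $\mathbb R$-linearly, independent. For invariance, take $v\in V^\perp$. Every $k\in K_R$ satisfies $k\cdot v=0$, so each term $e^{ik\cdot(x+v)}=e^{ik\cdot x}$ in Lemma~\ref{lem:fourier}.

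\emph{Part 4.} This is the main point. Pick a linear functional $\phi$, say $\phi(k)=k\cdot t^*$ for some nonzero $t^*\in T$, and let $M=\max_{t\in T}\phi(t)>0$. Then $\max_{K_R}\phi=RM$ exactly, attained at $Rt_{\max}$. Hence $K_{R+1}$ contains the point $(R+1)t_{\max}$, which has $\phi$-value $(R+1)M>RM$ and so is not in $K_R$. This gives strict inclusion.

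Linear growth of the maximal $\|k\|_\infty$ follows in the same way. The point $Rt_{\max}$ gives the lower bound $R\|t_{\max}\|_\infty$, and Part~2 gives the upper bound $R\Delta$. I still need $\|t_{\max}\|_\infty>0$, which holds since $\phi(t_{\max})>0$.

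Finally, I cite Khovanskii's theorem for the polynomiality of $|T^{(R)}|$, applied in the lattice $\mathcal L\cong\mathbb Z^m$. The leading coefficient as a normalized volume is the standard refinement, requiring only that $T$ generate $\mathcal L$ and $0\in T$; I would cite Nathanson for the exact form rather than reprove it. The $\Theta(R^m)$ bound then follows since $\mathrm{vol}_m(\conv T)>0$, because $T$ spans $V$.

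The only genuinely nontrivial ingredient is this asymptotic, and I would take it from the literature.
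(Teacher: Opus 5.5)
Your proposal is correct and follows the paper's proof essentially step for step. Part 1 uses the word-ball identification and the same witness $\lfloor\rho/\Delta\rfloor\,t^*$; Part 2 uses containment in $R\cdot\conv(T)$; Part 3 uses the sandwich $T\subseteq K_R\subseteq\mathcal L$; Part 4 uses the functional $\phi(y)=t^*\cdot y$ with $\max_{K_R}\phi=R\max_T\phi$ and cites Khovanskii for $\Theta(R^m)$. You also make explicit several points the paper leaves implicit: why $\dim V=m$, the two-sided bound $R\|t_{\max}\|_\infty\le\max_{K_R}\|k\|_\infty\le R\Delta$, and why $0\in T$ makes $\mathcal L$ the right lattice for the volume normalisation.
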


\begin{proof}
(1) Let $k\in\mathcal L$. Then $k=\sum_{t\in T}n_t t$ with $n_t\in\mathbb Z$; replacing $n_t t$ by
$|n_t|$ copies of $-t\in T$ when $n_t<0$ writes $k$ as a sum of finitely many elements of $T$, so $\ell_T(k)<\infty$
and $k\in K_{\ell_T(k)}$. The equivalence $k\in K_R\iff R\ge\ell_T(k)$ is~\eqref{eq:wordball}. The window statement
follows because $\mathcal L\cap B_\infty(\rho)$ is finite (a discrete group intersected with a bounded set), and
for the lower bound, take $t_0\in T$ with $\|t_0\|_\infty=\Delta$ and $j=\lfloor\rho/\Delta\rfloor$; then $jt_0\in\mathcal L\cap B_\infty(\rho)$, and since by part~(2) a sum of $R$ elements of $T$ has sup-norm at most $R\Delta$, $\ell_T(jt_0)\ge j$.

(2) A sum of $R$ elements of $T$ lies in $R\cdot\conv(T)$ (divide by $R$); each element has $\|t\|_\infty\le\Delta$.

(3) $K_1=T\subseteq K_R\subseteq\mathcal L\subseteq\mathrm{span}_{\mathbb R}(T)$, so all spans coincide.
The dimension of $\mathrm{span}_{\mathbb R}(T)$ equals $\rank\mathcal L=m$ because $T\subset\mathbb Z^d$.
If $v\perp V$ then $k\cdot v=0$ for every $k\in K_R$, so each term $e^{ik\cdot(x+v)}=e^{ik\cdot x}$.

(4) Pick $t_0\in T\setminus\{0\}$ and $\phi(y):=t_0\cdot y$, so $h:=\max_{t\in T}\phi(t)\ge\|t_0\|^2>0$.
Then $\max_{K_R}\phi=Rh$ (attained by $R$ copies of a maximiser), which strictly increases with $R$.
So $K_{R+1}\not\subseteq K_R$.
\end{proof}

\begin{remark}[What ``saturation'' does and does not mean]
What stabilises with depth is the \emph{direction structure} (the subspace $V$ and the lattice
$\mathcal L$), and it does so at $R=1$. The set of frequencies itself never saturates
(part 4): each extra reupload adds new, larger frequencies along the lattice. Part 1 gives
the exact depth $R_{\rm win}(\rho)$ at which a whole window of $\mathcal L$ is populated; it grows at
least like $\lfloor\rho/\Delta\rfloor$, i.e.\ linearly in $\rho$. For $m\ge2$, $\mathcal L$ has infinitely many primitive directions and $K_R$ is
always finite, so no finite depth reaches all of them. We obtain no upper bound on $R_{\rm win}$ beyond
its exact definition via $\ell_T$, which is computable for any given $S$.
\end{remark}

\begin{example}[Single qubit, $V(x)=e^{-ixZ}$]
$S=\{\pm1\}$, $T=\{-2,0,2\}$, $\mathcal L=2\mathbb Z$, $m=1$, $\ell_T(2j)=|j|$, and
$K_R=\{-2R,-2R+2,\dots,2R\}$, so $|K_R|=2R+1=\Theta(R)$ and $\max\|k\|=2R$, attaining the bound of Theorem~\ref{thm:structure}(2)
with $\Delta=2$. The generator $\pm2$ already lies in $K_1$, so the \emph{directions} are
complete at $R=1$, but the model keeps gaining higher harmonics as $R$ grows.
\end{example}

\begin{example}[Shifted spectrum]
For $S=\{0,3\}$, $T=\{0,\pm3\}$ and $K_R=\{-3R,\dots,3R\}\cap3\mathbb Z$. Here
$\Delta=3<2\Lambda_{\max}=6$: the bound $2R\Lambda_{\max}$ is not tight while $R\Delta$ is.
\end{example}

\begin{example}[Joint spectrum need not be a product]
\label{ex:joint}
Two inputs encoded on two qubits with $H_1=Z\otimes I$, $H_2=I\otimes Z$ give
$S=\{\pm1\}^2$, $T=\{-2,0,2\}^2$, $\mathcal L=(2\mathbb Z)^2$, $m=2$, $|K_R|=(2R+1)^2$.
If instead both inputs are encoded by $Z$ on the same qubit ($H_1=H_2=Z$), the joint
spectrum is $S=\{(1,1),(-1,-1)\}\subsetneq\{\pm1\}^2$, $T=\{0,\pm(2,2)\}$, $m=1<d=2$,
and by Theorem~\ref{thm:structure}(3) $f_\theta$ depends on $(x_1,x_2)$ only through
$x_1+x_2$, at any depth.
\end{example}

\subsection{Coefficients, Parameters and Capacity}
\label{subsec:coeff}

The spectrum $K_R$ restricts \emph{which} frequencies can appear; it does not by itself
determine the expressivity of the model. The coefficients $c_k(\theta)$ are constrained by the
architecture: they satisfy $c_{-k}=\overline{c_k}$, and they are functions of only
$P=\sum_i\dim\theta_i$ trainable parameters (about $(R+1)p$ for blocks with $p$ parameters).

\begin{remark}[Parameter count versus spectrum size]
A real-valued trigonometric polynomial with symmetric spectrum $K_R\ni0$ has exactly $|K_R|$
real degrees of freedom ($c_0$ real, plus $(|K_R|-1)/2$ complex conjugate pairs). If $P<|K_R|$, the
parameter-to-function map cannot fill an open subset of that space. Since $P$ grows linearly in $R$
and $|K_R|=\Theta(R^m)$, for $m\ge2$ the spectrum outgrows the parameter budget, and it is the
number of trainable parameters, not $K_R$, that limits the model. For a single qubit ($m=1$) with
general $SU(2)$ blocks ($p=3$), $P=3(R+1)>2R+1=|K_R|$, so parameter count is not the binding constraint,
but the constraint $|f_\theta|\le\|M\|$ and the structure of the reachable coefficients still are
(cf.~\cite{Yu2022}).
\end{remark}

\begin{proposition}[Spectral capacity bound]
\label{prop:capacity}
Let $\mathcal F_R=\{f_\theta:\theta\}$ be the class of $R$-reupload models with $\|M\|_{\rm op}\le B$, under
Assumptions~\ref{ass:decomp} and~\ref{ass:commensurate}. For any $x_1,\dots,x_n$ the empirical Rademacher complexity satisfies
\[
\widehat{\mathfrak R}_n(\mathcal F_R)\ \le\ B\sqrt{\frac{|K_R|}{n}}\ \le\ B\sqrt{\frac{(2R\Delta+1)^d}{n}},
\qquad |K_R|=\Theta(R^m).
\]
Consequently, for an $\mathrm{Lip}$-Lipschitz loss with values in $[0,C]$ and i.i.d.\ data, with probability
$\ge1-\delta$ every $f\in\mathcal F_R$ satisfies
\[
\text{risk}(f)\le\widehat{\text{risk}}(f)+2\,\mathrm{Lip}\,B\sqrt{|K_R|/n}+C\sqrt{\ln(1/\delta)/(2n)} .
\]
\end{proposition}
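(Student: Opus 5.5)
The plan is to embed $\mathcal F_R$ into a norm-bounded linear class over the Fourier features indexed by $K_R$, and then use the standard Rademacher bound for such classes. By Lemma~\ref{lem:fourier}, every $f_\theta\in\mathcal F_R$ has the form $f_\theta(x)=\sum_{k\in K_R}c_k(\theta)e^{ik\cdot x}$. This is an exact identity for real $f_\theta$, so no real/imaginary splitting is needed. The class is therefore contained in $\{x\mapsto\langle c,\phi(x)\rangle : c\in\mathcal C\}$, where $\phi(x):=(e^{ik\cdot x})_{k\in K_R}\in\mathbb C^{|K_R|}$ and $\mathcal C$ is the set of reachable coefficient vectors. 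The key step is to show $\mathcal C\subseteq\{c:\|c\|_2\le B\}$.

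\textbf{Coefficient bound via Parseval.} Under Assumption~\ref{ass:commensurate} we have $K_R\subseteq\mathcal L\subseteq\mathbb Z^d$. Hence the characters $e^{ik\cdot x}$, $k\in K_R$, are distinct and orthonormal in $L^2([0,2\pi]^d)$ with normalized measure. Parseval then gives
\[
\sum_{k\in K_R}|c_k(\theta)|^2=\frac{1}{(2\pi)^d}\int_{[0,2\pi]^d}|f_\theta(x)|^2\,dx\le\sup_x|f_\theta(x)|^2\le\|M\|_{\rm op}^2\le B^2,
\]
using $|\langle\psi|M|\psi\rangle|\le\|M\|_{\rm op}$ for the unit vector $\psi=U(\theta,x)|0^{\otimes n}\rangle$. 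This is the step where commensurability is genuinely used, and I expect it to be the main point needing care. Without integrality, the frequencies need not be orthogonal on a common torus (cf.\ the remark after Assumption~\ref{ass:commensurate}). Also, the index set $K_R$ must consist of distinct frequencies, which is exactly how the grouping in Lemma~\ref{lem:fourier} was done.

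\textbf{Rademacher computation.} For fixed $x_1,\dots,x_n$ and Rademacher signs $\sigma_i$, Cauchy--Schwarz gives
\[
\sup_{\theta}\frac1n\sum_i\sigma_if_\theta(x_i)\le\frac1n\sup_{\|c\|_2\le B}\Big|\sum_k c_k\sum_i\sigma_ie^{ik\cdot x_i}\Big|\le\frac Bn\Big(\sum_{k\in K_R}\Big|\sum_i\sigma_ie^{ik\cdot x_i}\Big|^2\Big)^{1/2}.
\]
Take expectations and apply Jensen to move $\mathbb E$ inside the square root. Independence gives $\mathbb E|\sum_i\sigma_ie^{ik\cdot x_i}|^2=\sum_i|e^{ik\cdot x_i}|^2=n$. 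Hence $\widehat{\mathfrak R}_n\le \frac Bn\sqrt{n|K_R|}=B\sqrt{|K_R|/n}$.

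\textbf{Remaining bounds.} The cruder bound follows from Theorem~\ref{thm:structure}(2): $K_R\subseteq\mathbb Z^d\cap B_\infty(R\Delta)$, a set with at most $(2R\Delta+1)^d$ points. The asymptotic $|K_R|=\Theta(R^m)$ is Theorem~\ref{thm:structure}(4). For the risk bound, apply the standard uniform-deviation theorem for losses with values in $[0,C]$: McDiarmid's bounded-differences inequality plus symmetrization give
\[
\text{risk}\le\widehat{\text{risk}}+2\,\widehat{\mathfrak R}_n(\ell\circ\mathcal F_R)+C\sqrt{\ln(1/\delta)/(2n)}.
\]
Talagrand's contraction lemma then gives $\widehat{\mathfrak R}_n(\ell\circ\mathcal F_R)\le \mathrm{Lip}\,\widehat{\mathfrak R}_n(\mathcal F_R)$. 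Inserting the bound above yields the stated inequality. (Using the empirical rather than the expected Rademacher complexity is harmless here, since our bound on it holds uniformly over all samples.)
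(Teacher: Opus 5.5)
Your proposal is correct and follows the paper's proof essentially step for step. You embed $\mathcal F_R$ into a linear class over the Fourier features indexed by $K_R$, use integrality of $K_R$ and Parseval on $[0,2\pi)^d$ to get $\|c\|_2\le\|M\|_{\rm op}\le B$, and then apply Cauchy--Schwarz and Jensen to obtain $B\sqrt{|K_R|/n}$. The cruder count comes from Theorem~\ref{thm:structure}(2), the $\Theta(R^m)$ rate from part~(4), and the risk bound from the standard Rademacher theorem plus Talagrand contraction, exactly as in the paper. One detail the paper leaves implicit: the stated constant $C\sqrt{\ln(1/\delta)/(2n)}$ belongs to the expected-complexity version of that theorem. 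Your remark that the bound holds uniformly over samples, so you may pass to the expected Rademacher complexity, is what makes that constant valid.
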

\begin{proof}
Write $f_\theta(x)=\langle c,\varphi(x)\rangle$ with $\varphi(x)=(e^{-ik\cdot x})_{k\in K_R}$ (up to
conjugation conventions), so $\|\varphi(x)\|_2^2=|K_R|$. All frequencies lie in $\mathbb Z^d$, so
$f_\theta$ is $2\pi$-periodic in each coordinate; by Parseval on $[0,2\pi)^d$,
$\|c\|_2^2=(2\pi)^{-d}\int|f_\theta|^2\le\|M\|_{\rm op}^2\le B^2$. Then
$\mathbb E_\sigma\sup_c\sum_i\sigma_i\langle c,\varphi(x_i)\rangle\le B\,\mathbb E\|\sum_i\sigma_i\varphi(x_i)\|_2
\le B\sqrt{n|K_R|}$. The size bounds follow from Theorem~\ref{thm:structure}(2),(4), and the
generalization bound is the standard Rademacher bound with Talagrand contraction~\cite{MohriBook}.
\end{proof}

The bound is deliberately spectral: it does not use the parameter count $P$, and can therefore be loose for
small $P$. It does capture the qualitative trade-off: capacity grows like $R^{m/2}$, and the sample size needed for
uniform generalization scales with $|K_R|$. It is an upper bound only; we do not claim it is tight or predictive of
the experiments below. In our experiments the train/test gaps are small (Section~\ref{sec:exp}), so the models are in
an under-fitting regime where this uniform bound is not the relevant constraint.

\subsection{Discussion: Depth, Spectrum and Plateaus}

Two facts must be kept apart. First, the \emph{direction structure} of the spectrum is complete at $R=1$
(Theorem~\ref{thm:structure}(3)), so choosing the encoding Hamiltonians (which fixes $\mathcal L$
and $m$) matters more than raw depth: a lattice of rank $m<d$ cannot represent arbitrary $d$-dimensional
Fourier structure at any depth. Second, at fixed encoding the spectrum keeps growing with $R$, so the spectral
picture does \emph{not} by itself predict a plateau of accuracy in $R$. Any plateau must therefore be attributed to other
factors, such as the limited set of coefficient vectors reachable by the trainable blocks, optimization difficulty, or
the \emph{scale of the inputs relative to the available frequencies}. Our controlled experiments (Section~\ref{sec:exp})
bear on the last factor: the input scale (i.e.\ how many periods of the model the data domain covers) changes the
error by up to an order of magnitude, far more than the depth does. %We did not run experiments that separate coefficient constraints from optimization difficulty.

\section{Experiments}
\label{sec:exp}

Every table reports mean$\pm$standard deviation over random initialisation seeds
(5 seeds for the univariate and classification experiments, 3 for the Feynman experiments); the data set and the train/test split
are fixed across seeds. All errors are measured on a held-out test set unless labelled ``train''.
The experiments are small, single-qubit, and were run with limited hyper-parameter tuning; they illustrate the behaviour of QDRs and test
the spectral statements of Section~2 only qualitatively.

\paragraph{Setup.}
\emph{Architecture.} One qubit, $n=1$. Model~\eqref{eq:U} with $R+1$ trainable general $SU(2)$ blocks
$W_i=R_z(\cdot)R_y(\cdot)R_z(\cdot)$ (3 parameters each), so $P=3(R+1)$ and the parameter counts
$9,15,18,21,24,27,30,36,45$ correspond to $R=2,4,5,6,7,8,9,11,14$. Observable $M=Z$, input state $|0\rangle$.
\emph{Encoding.} For $d$ inputs, $V(x)=\prod_{j=1}^d e^{-i\,\sigma\,x_jP_j}$ (ordered product) with $P_j\in\{X,Y,Z\}$ cycling
with $j$, where $x_j\in[-\pi/2,\pi/2]$ is the affine rescaling of input $j$ (computed from the training range) and
$\sigma>0$ is an input-scale factor (default $\sigma=1$). For $d=1$ and $\sigma=1$ this is $e^{-ixX}$, which equals $e^{-ixZ}$ (with $S=\{\pm1\}$) up to a fixed basis change absorbed by the neighbouring $SU(2)$ blocks.
We computed the joint spectrum numerically (a sign vector $s\in\{\pm1\}^d$ belongs to $S$ iff the ordered product of eigenprojectors is non-zero):
$S=\{\pm1\}^d$ for $d=1,2,3,4,9$, hence $T=\{-2,0,2\}^d$, $\mathcal L=(2\mathbb Z)^d$, $m=\mathrm{rank}\,\mathcal L=d$ and
$|K_R|=(2R+1)^d$. So the encoding does not suffer from the rank deficiency of Example~\ref{ex:joint}, and the lattice is full-rank in every experiment.
\emph{Training.} Mean-squared-error loss on targets normalised to $[-1,1]$ (regression) or labels $\pm1$ (classification; prediction is $\mathrm{sign}\langle Z\rangle$);
full-batch Adam, learning rate decayed exponentially from $0.05$ to $0.003$; parameters initialised uniformly in $[-\pi,\pi]$.
\emph{Baseline.} A two-hidden-layer $\tanh$ MLP whose width is the largest with at most $P$ parameters, trained with the same loss, optimiser and epoch budget (learning rate $0.02\to0.001$ unless stated).
\emph{Spectrum check.} For $d=1$ and $R\in\{1,2,4,8\}$ we sampled random parameters, evaluated the simulated model on a grid and took an FFT:
exactly the $2R+1$ predicted frequencies $\{-2R,\dots,2R\}\cap2\mathbb Z$ were non-zero and none outside. For $d=2$ and $R=2,3$ all $(2R+1)^2$ frequencies of $K_R$ were
non-zero and none outside; for $R=1$ all non-zero frequencies lay in $K_1$ (8 of 9 were non-zero for the sampled parameters). This confirms Lemma~\ref{lem:fourier} and the
description of $K_R$ for the simulated architecture.

\subsection{QDRs as Regressors}

\subsubsection{Univariate Function Approximation}
Targets are normalised to $[-1,1]$ using their range on a dense grid of the domain (so a constant predictor has RMSE of order $0.5$);
$256$ training and $512$ test points are drawn uniformly from the domain, which is stated in Table~\ref{tab:uni}
(the domains are our choice). Figure~\ref{fig:univar} and Table~\ref{tab:uni} give the results for a 27-parameter ($R=8$) QDR trained for 400 epochs.

\begin{figure}[htbp]
  \centering
  \includegraphics[width=\linewidth]{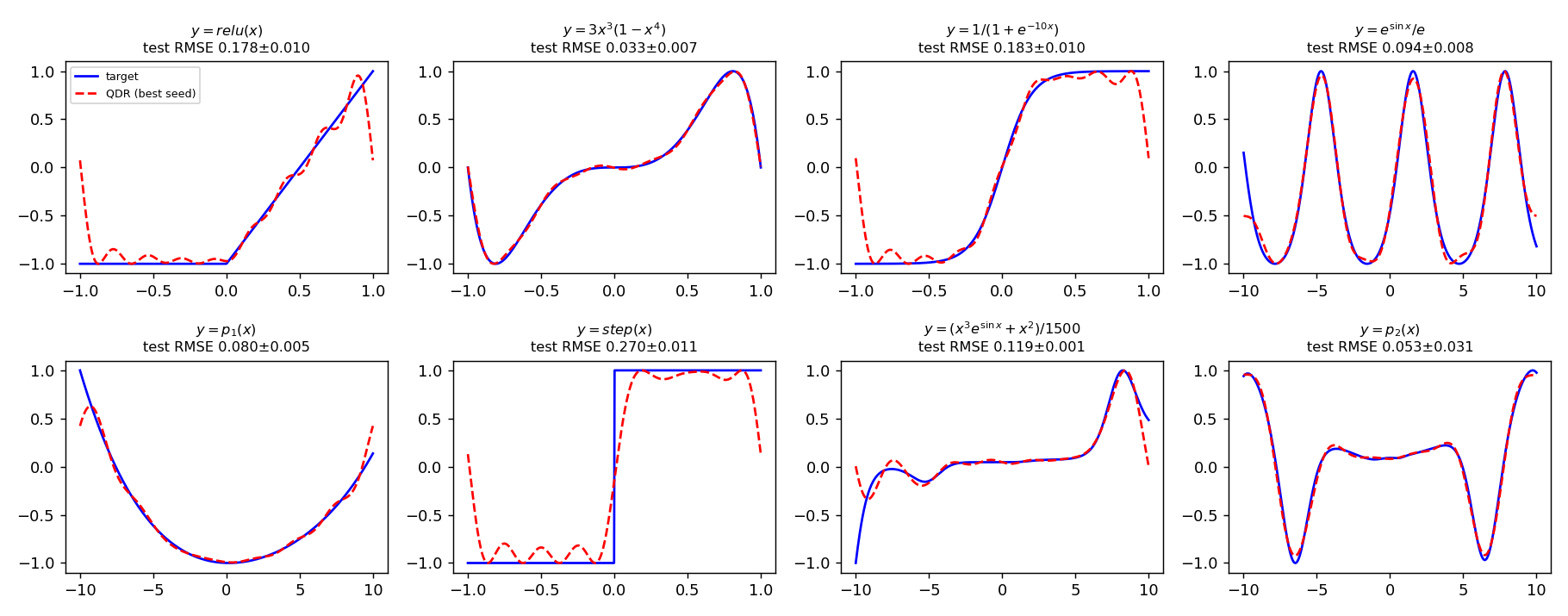}
  \caption{Univariate targets (blue) and the QDR fit (red dashed, best of 5 seeds) for a 27-parameter QDR with the default encoding ($\sigma=1$), 400 epochs. Targets are normalised to $[-1,1]$; the test RMSE shown is mean$\pm$std over seeds.}
  \label{fig:univar}
\end{figure}

\begin{table}[htbp]
\centering
\caption{Univariate regression, 27-parameter QDR ($R=8$), 400 epochs, 5 seeds, RMSE on targets normalised to $[-1,1]$. QDR with $\sigma=1$ (data domain $=$ one period of the model), QDR with $\sigma=0.5$ (half a period), and a 22-parameter MLP (the largest width $\le27$).}
\label{tab:uni}
\small
\begin{tabular}{@{}l c c c c c@{}}
\toprule
Target & Domain & QDR train ($\sigma{=}1$) & QDR test ($\sigma{=}1$) & QDR test ($\sigma{=}0.5$) & MLP test\\
\midrule
$\mathrm{relu}(x)$ & $[-1,1]$ & 0.195 & 0.178$\pm$0.010 & 0.030$\pm$0.006 & 0.050$\pm$0.004 \\
$3x^3(1-x^4)$ & $[-1,1]$ & 0.036 & 0.033$\pm$0.007 & 0.049$\pm$0.007 & 0.187$\pm$0.025 \\
$1/(1+e^{-10x})$ & $[-1,1]$ & 0.200 & 0.183$\pm$0.010 & 0.031$\pm$0.010 & 0.035$\pm$0.035 \\
$e^{\sin x}/e$ & $[-10,10]$ & 0.099 & 0.094$\pm$0.008 & 0.201$\pm$0.004 & 0.594$\pm$0.128 \\
$p_1(x)$ & $[-10,10]$ & 0.086 & 0.080$\pm$0.005 & 0.023$\pm$0.007 & 0.073$\pm$0.027 \\
$\mathrm{step}(x)$ & $[-1,1]$ & 0.290 & 0.270$\pm$0.011 & 0.248$\pm$0.010 & 0.198$\pm$0.044 \\
$(x^3e^{\sin x}+x^2)/1500$ & $[-10,10]$ & 0.126 & 0.119$\pm$0.001 & 0.081$\pm$0.014 & 0.151$\pm$0.016 \\
$p_2(x)$ & $[-10,10]$ & 0.049 & 0.053$\pm$0.031 & 0.102$\pm$0.016 & 0.372$\pm$0.112 \\
\bottomrule
\end{tabular}
\end{table}

With the default encoding the QDR has lower test RMSE than the MLP on four targets (the degree-7 polynomial, $e^{\sin x}/e$, $x^3e^{\sin x}+x^2$ and $p_2$) and clearly higher on three (ReLU, sigmoid, step);
it is comparable on $p_1$. The three targets on which the QDR is worse are exactly those whose values at the two ends of the domain differ and which are non-smooth or steep: with $\sigma=1$
the domain is one period of the model, so the model is forced to be periodic and the target's periodic extension has a jump, which a trigonometric polynomial resolves slowly. Halving the input scale
removes this wrap-around and reduces the test RMSE of ReLU from $0.178$ to $0.030$, of the sigmoid from $0.183$ to $0.031$ and of $p_1$ from $0.080$ to $0.023$; these are below the MLP's $0.050$, $0.035$ and $0.073$ respectively (for the sigmoid the difference to the MLP is within its seed variation).
On the targets that oscillate over several periods
($e^{\sin x}/e$ and $p_2$) the half-period encoding is worse ($0.201$ and $0.102$ versus $0.094$ and $0.053$), as expected since they need more of the available frequencies.
The step function remains poorly fitted by both models ($0.25$--$0.27$ for the QDR). We note that changing $\sigma$ alters both the periodic wrap-around and the effective frequency content, so this ablation is consistent with, but does not isolate, the wrap-around explanation.

\subsubsection{Effect of the Number of Parameters}
We study the target $y=\tanh(x^3)$ on $[-1.5,1.5]$ (normalised to $[-1,1]$) with a \emph{fixed} epoch budget for all parameter counts,
at 400 and at 1600 epochs, and compare with matched MLPs (Table~\ref{tab:sweep}, Figure~\ref{fig:tanhparams}).

\begin{table}[htbp]
\centering
\caption{Test RMSE for $\tanh(x^3)$ versus parameter count $P=3(R+1)$ with fixed epoch budgets, 5 seeds. The last two columns are the matched MLPs (their parameter counts are $13,22,33$ for $P\in[15,21]$, $[24,30]$, $\{36,45\}$).}
\label{tab:sweep}
\small
\begin{tabular}{@{}c c c c c c c@{}}
\toprule
$P$ & $R$ & QDR $\sigma{=}1$, 400 ep & QDR $\sigma{=}1$, 1600 ep & QDR $\sigma{=}0.5$, 400 ep & MLP 400 ep & MLP 1600 ep\\
\midrule
15 & 4 & 0.212$\pm$0.003 & 0.210$\pm$0.000 & 0.030$\pm$0.013 & 0.121 & 0.067 \\
18 & 5 & 0.205$\pm$0.004 & 0.200$\pm$0.004 & 0.020$\pm$0.008 & 0.121 & 0.067 \\
21 & 6 & 0.203$\pm$0.006 & 0.190$\pm$0.005 & 0.017$\pm$0.003 & 0.121 & 0.067 \\
24 & 7 & 0.190$\pm$0.009 & 0.178$\pm$0.002 & 0.017$\pm$0.003 & 0.068 & 0.039 \\
27 & 8 & 0.184$\pm$0.009 & 0.168$\pm$0.002 & 0.016$\pm$0.005 & 0.068 & 0.039 \\
30 & 9 & 0.173$\pm$0.009 & 0.157$\pm$0.002 & 0.018$\pm$0.004 & 0.068 & 0.039 \\
36 & 11 & 0.167$\pm$0.009 & 0.148$\pm$0.003 & 0.010$\pm$0.003 & 0.063 & 0.018 \\
45 & 14 & 0.153$\pm$0.006 & 0.132$\pm$0.005 & 0.010$\pm$0.002 & 0.063 & 0.018 \\
\bottomrule
\end{tabular}
\end{table}

\begin{figure}[htbp]
  \centering
  \includegraphics[width=\linewidth]{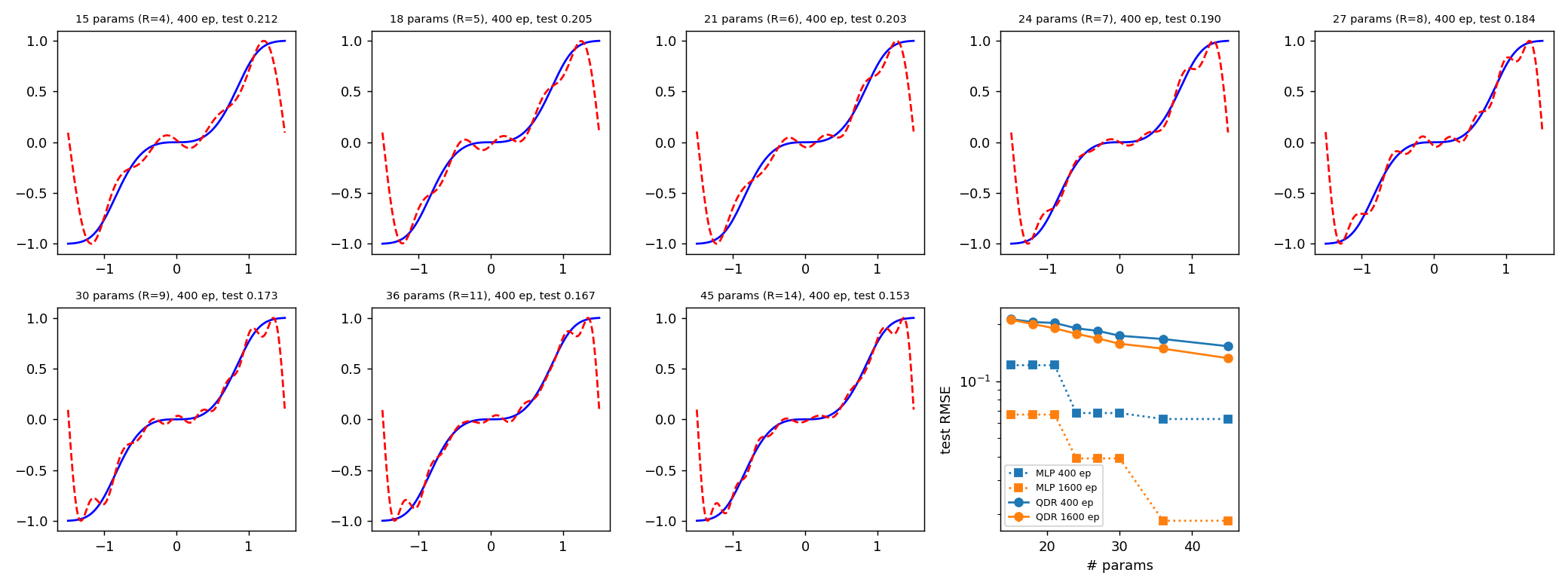}
  \caption{$\tanh(x^3)$ target (blue) and QDR fits (red dashed, best seed) for 15--45 parameters, all trained for 400 epochs with $\sigma=1$ (test RMSE in the panel titles; last two panels: test RMSE versus $P$ at 400 and 1600 epochs for the QDR and the matched MLP).}
  \label{fig:tanhparams}
\end{figure}

\begin{figure}[htbp]
  \centering
  \includegraphics[width=0.7\linewidth]{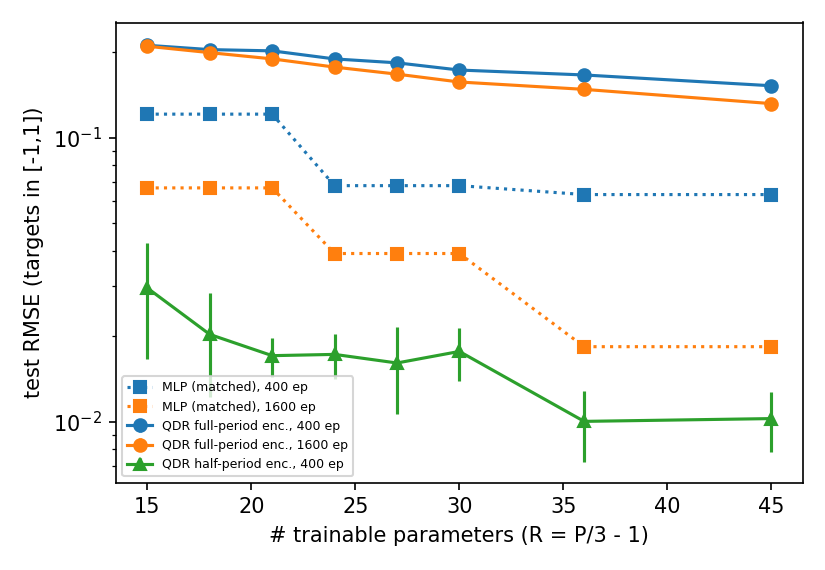}
  \caption{Test RMSE versus number of parameters for $\tanh(x^3)$ (mean$\pm$std over 5 seeds, log scale): QDR with $\sigma=1$ at 400 and 1600 epochs, QDR with $\sigma=0.5$, and matched MLPs.}
  \label{fig:tanhsummary}
\end{figure}

With the default encoding the error decreases steadily and slowly in $P$ (from $0.212$ at $15$ parameters to $0.153$ at $45$ at 400 epochs, and to $0.132$ at 1600 epochs); we do \emph{not} observe a plateau at about 27 parameters. The QDR with $\sigma=1$ is worse than the matched MLP at every $P$ (e.g.\ $0.153$ versus $0.063$ at $P=45$, 400 epochs).
The target is $\tanh(x^3)$ with end values $\mp1$, i.e.\ it has the same wrap-around problem as above, and with $\sigma=0.5$ the error is about an order of magnitude smaller ($0.030$ at $P=15$, $0.016$ at $P=27$, $\approx0.010$ at $P=36,45$),
decreasing up to about $P=36$ and then flat within seed noise at 400 epochs. This flattening is in the regime where Section~2.4 says it must come from coefficient constraints, optimisation or input scaling rather than from the spectrum; we did not
test which. In this single-qubit setting $|K_R|=2R+1\le P$, so the parameter count is not what limits the fit.

\subsubsection{Multivariate Function Approximation}
We implemented 27 Feynman equations from their closed forms. For each equation we use 1000 training and 1000 test points (non-finite samples removed) and normalise targets to $[-1,1]$ with the training range. For each equation the QDR's input scale $\sigma\in\{0.05,0.1,0.2,0.4\}$ and the MLP's learning rate $\in\{0.02,0.005\}$
were selected on a held-out validation set of 200 training points (2 seeds, 150 epochs); the models were then fitted on all 1000 training points for 300 epochs (3 seeds). The same $\sigma$ is used for $P=30$ and $P=45$.
``Mean-pred'' is the test RMSE of predicting the training mean. Each equation is evaluated at two parameter counts.

\begin{table*}[htbp]
\centering
\caption{Feynman equations: test RMSE (3 seeds), QDR versus parameter-matched MLP at $P=30$ ($R=9$) and $P=45$ ($R=14$); $d$ is the number of inputs and $\sigma$ the selected QDR input scale. MLP parameter counts: at most 30 / 45.}
\label{tab:DR_compact}
\scriptsize
\begin{tabular}{@{}l c c c c c c c@{}}
\toprule
Equation & $d$ & $\sigma$ & Mean-pred & QDR 30 & MLP 30 & QDR 45 & MLP 45\\
\midrule
I.6.2 & 2 & 0.2 & 0.397 & 0.060$\pm$0.018 & 0.068$\pm$0.017 & 0.037$\pm$0.006 & 0.034$\pm$0.005 \\
I.6.2b & 3 & 0.2 & 0.409 & 0.092$\pm$0.013 & 0.132$\pm$0.005 & 0.073$\pm$0.013 & 0.103$\pm$0.017 \\
I.9.18 & 9 & 0.05 & 0.280 & 0.259$\pm$0.002 & 0.059$\pm$0.008 & 0.258$\pm$0.001 & 0.059$\pm$0.008 \\
I.12.11 & 5 & 0.05 & 0.222 & 0.185$\pm$0.001 & 0.069$\pm$0.005 & 0.187$\pm$0.003 & 0.066$\pm$0.021 \\
I.13.12 & 5 & 0.05 & 0.174 & 0.148$\pm$0.001 & 0.057$\pm$0.024 & 0.147$\pm$0.003 & 0.055$\pm$0.025 \\
I.15.3x & 4 & 0.05 & 0.417 & 0.348$\pm$0.001 & 0.054$\pm$0.003 & 0.348$\pm$0.001 & 0.044$\pm$0.007 \\
I.16.6 & 3 & 0.1 & 0.519 & 0.068$\pm$0.007 & 0.069$\pm$0.003 & 0.055$\pm$0.006 & 0.069$\pm$0.016 \\
I.18.4 & 4 & 0.05 & 0.451 & 0.242$\pm$0.007 & 0.106$\pm$0.014 & 0.237$\pm$0.001 & 0.089$\pm$0.003 \\
I.26.2 & 2 & 0.2 & 0.363 & 0.046$\pm$0.006 & 0.060$\pm$0.004 & 0.038$\pm$0.007 & 0.050$\pm$0.015 \\
I.27.6 & 3 & 0.1 & 0.378 & 0.076$\pm$0.009 & 0.092$\pm$0.009 & 0.057$\pm$0.003 & 0.074$\pm$0.005 \\
I.29.16 & 4 & 0.1 & 0.413 & 0.293$\pm$0.002 & 0.292$\pm$0.083 & 0.290$\pm$0.001 & 0.169$\pm$0.091 \\
I.30.3 & 3 & 0.2 & 0.287 & 0.202$\pm$0.004 & 0.209$\pm$0.008 & 0.194$\pm$0.003 & 0.204$\pm$0.004 \\
I.30.5 & 3 & 0.1 & 0.254 & 0.073$\pm$0.007 & 0.074$\pm$0.007 & 0.075$\pm$0.003 & 0.065$\pm$0.005 \\
I.37.4 & 3 & 0.2 & 0.428 & 0.074$\pm$0.012 & 0.156$\pm$0.025 & 0.081$\pm$0.011 & 0.124$\pm$0.016 \\
I.40.1 & 6 & 0.05 & 0.368 & 0.247$\pm$0.001 & 0.118$\pm$0.012 & 0.249$\pm$0.002 & 0.107$\pm$0.005 \\
I.44.4 & 5 & 0.05 & 0.181 & 0.144$\pm$0.004 & 0.090$\pm$0.011 & 0.138$\pm$0.003 & 0.050$\pm$0.015 \\
I.50.26 & 4 & 0.05 & 0.303 & 0.252$\pm$0.001 & 0.241$\pm$0.009 & 0.250$\pm$0.000 & 0.237$\pm$0.008 \\
II.2.42 & 5 & 0.05 & 0.148 & 0.096$\pm$0.005 & 0.086$\pm$0.009 & 0.096$\pm$0.004 & 0.072$\pm$0.008 \\
II.6.15a & 5 & 0.1 & 0.182 & 0.145$\pm$0.005 & 0.076$\pm$0.011 & 0.144$\pm$0.003 & 0.079$\pm$0.013 \\
II.11.7 & 6 & 0.05 & 0.114 & 0.121$\pm$0.001 & 0.097$\pm$0.006 & 0.119$\pm$0.002 & 0.097$\pm$0.002 \\
II.11.27 & 4 & 0.05 & 0.295 & 0.121$\pm$0.002 & 0.084$\pm$0.012 & 0.121$\pm$0.004 & 0.076$\pm$0.003 \\
II.35.18 & 5 & 0.05 & 0.454 & 0.279$\pm$0.002 & 0.106$\pm$0.027 & 0.277$\pm$0.001 & 0.092$\pm$0.007 \\
II.36.38 & 8 & 0.05 & 0.185 & 0.178$\pm$0.002 & 0.088$\pm$0.009 & 0.178$\pm$0.002 & 0.085$\pm$0.004 \\
II.38.3 & 4 & 0.05 & 0.241 & 0.195$\pm$0.002 & 0.091$\pm$0.011 & 0.195$\pm$0.001 & 0.083$\pm$0.004 \\
III.9.52 & 6 & 0.1 & 0.215 & 0.193$\pm$0.006 & 0.161$\pm$0.008 & 0.193$\pm$0.002 & 0.156$\pm$0.013 \\
III.10.19 & 4 & 0.05 & 0.432 & 0.195$\pm$0.002 & 0.072$\pm$0.007 & 0.196$\pm$0.001 & 0.066$\pm$0.005 \\
III.17.37 & 3 & 0.2 & 0.317 & 0.149$\pm$0.068 & 0.122$\pm$0.018 & 0.062$\pm$0.006 & 0.065$\pm$0.014 \\
\bottomrule
\end{tabular}
\end{table*}

Summary (Table~\ref{tab:DR_compact}). The median test RMSE over the 27 equations is 0.149 (QDR) versus 0.090 (MLP) at $P=30$ and 0.147 versus 0.076 at $P=45$; the median mean-predictor RMSE is 0.303.
The QDR beats the matched MLP on 8 of 27 equations at $P=30$ and on 7 at $P=45$, and \emph{all} of these wins are on equations with $d\le3$ inputs (there are 9 such equations); on the 18 equations with $d\ge4$ it never wins at $P=30$.
The QDR beats the mean predictor on 26 of 27 equations at $P=30$ (the exception is II.11.7) but for 16 of the 18 equations with $d\ge4$ its RMSE is more than half of the mean predictor's, i.e.\ it explains only part of the variance.
Train and test errors are close (median train RMSE 0.157 against test 0.149 at $P=30$), so the poor fits are \emph{under-fitting}, not over-fitting. For 15 of 27 equations the validation search selected the smallest scale in the grid ($\sigma=0.05$),
so the grid was too narrow at that end, and for those the test error is practically identical at $P=30$ and $P=45$ (e.g.\ I.9.18: $0.259$ and $0.258$), suggesting the model is then confined to a very low-frequency regime and extra depth is not used.
 The total number of phases in $K_R$ grows as $(2R+1)^d$ while $P$ grows linearly, so for $d\ge4$ the parameter budget is far below $|K_R|$ (Remark in Section~2.3); optimisation of a single-qubit circuit with $R\,d$ encoding gates is another candidate. We did not test either.
In a preliminary run with $\sigma=1$ for all equations the QDR's test RMSE was above the mean-predictor level on most of the equations it covered, which shows how sensitive the QDR is to the encoding scale.

\subsection{QDRs as Classifiers}
Data (generated 400 training and 1000 test points): XOR (uniform on $[-1,1]^2$, label $=\mathrm{sign}(x_1x_2)$), two Gaussian blobs (means $\pm0.6\,(1,1)$, standard deviation $0.35$) and a two-arm spiral with noise $0.05$. Inputs are rescaled to $[-\pi/2,\pi/2]$ using the training range.
The ``3-layer, 9-parameter'' models are $R=2$; the spiral model has 15 trainable blocks ($R=14$, 45 parameters). Training is 500 epochs (XOR, blobs) or 800 (spiral), 5 seeds. We also vary the input scale $\sigma$ on the spiral.

\begin{table}[htbp]
\centering
\caption{Classification accuracy (mean$\pm$std over 5 seeds). The MLP has the largest width with at most $P$ parameters (reported in the column $P$ as the actual count).}
\label{tab:cls}
\small
\begin{tabular}{@{}l c c c@{}}
\toprule
Model & $P$ & Train acc. & Test acc.\\
\midrule
XOR, QDR & 9 & 0.90$\pm$0.04 & 0.84$\pm$0.04 \\
XOR, MLP & 7 & 0.68$\pm$0.01 & 0.66$\pm$0.02 \\
Gaussian blobs, QDR & 9 & 0.92$\pm$0.01 & 0.93$\pm$0.00 \\
Gaussian blobs, MLP & 7 & 1.00$\pm$0.01 & 0.99$\pm$0.01 \\
Spiral, QDR (scale 0.5) & 45 & 0.99$\pm$0.00 & 1.00$\pm$0.00 \\
Spiral, QDR (scale 1.0) & 45 & 0.99$\pm$0.01 & 0.97$\pm$0.00 \\
Spiral, QDR (scale 2.0) & 45 & 0.81$\pm$0.01 & 0.74$\pm$0.01 \\
Spiral, MLP & 37 & 0.84$\pm$0.13 & 0.85$\pm$0.12 \\
\bottomrule
\end{tabular}
\end{table}

\begin{figure}[htbp]
  \centering
  \includegraphics[width=\linewidth]{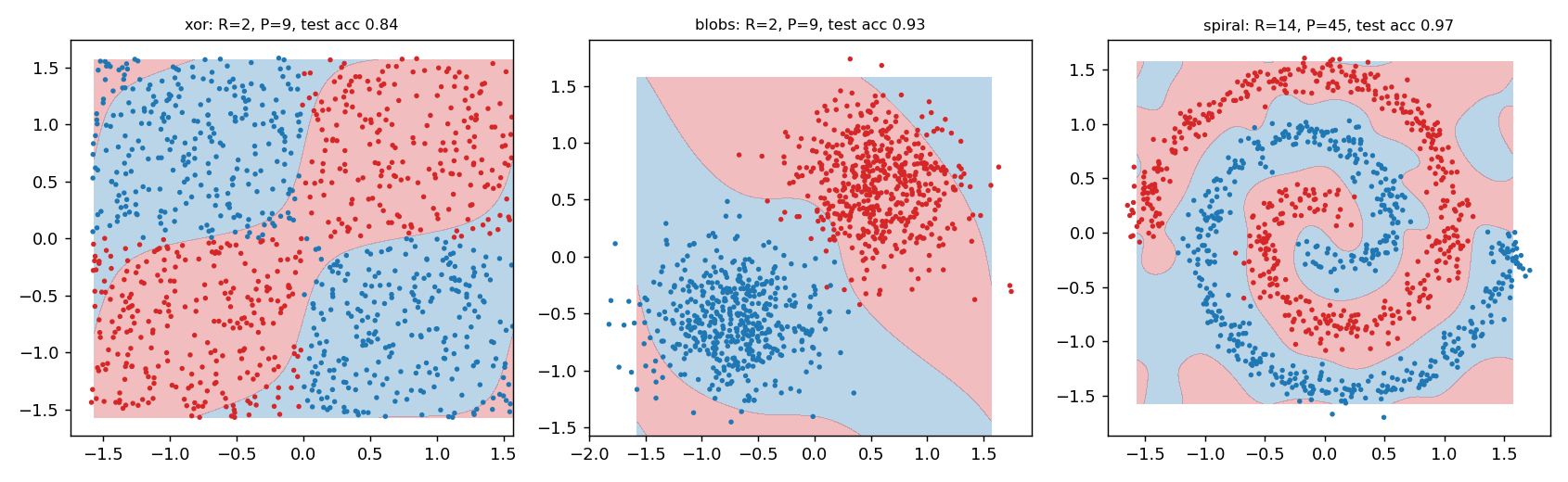}
  \caption{Decision regions of the best-of-5 QDR ($\sigma=1$) on the test points: XOR ($P=9$), Gaussian blobs ($P=9$) and spiral ($P=45$).}
  \label{fig:cls}
\end{figure}

On XOR the 9-parameter QDR reaches $0.84$ test accuracy, against $0.66$ for a 7-parameter MLP; on the Gaussian blobs it reaches $0.93$, below the MLP's $0.99$.
On the spiral, the 45-parameter QDR reaches $0.973$ test accuracy at $\sigma=1$ and $0.998$ at $\sigma=0.5$, while it drops to $0.738$ at $\sigma=2$; a 37-parameter MLP with the same optimiser reaches $0.847\pm0.123$ (high variance across seeds, with no tuning of its learning rate on this task).
The result thus depends strongly on the encoding scale. With larger $\sigma$ the data covers more periods of the model and the fit degrades; with $\sigma\le1$ the QDR fits the spiral well.
Because the MLP baseline was not tuned beyond its default learning rate on the classification tasks, we do not claim a superiority of either model on them.

\section{Conclusion}

We gave an explicit description of the Fourier spectrum of multivariate reuploading models:
$K_R$ is the ball of radius $R$ in the word metric of the difference set $T=S-S$ on the lattice
$\mathcal L$. The direction structure (the subspace $V$, of dimension $m=\mathrm{rank}\,\mathcal L$) is
fixed at $R=1$ and determines which functions are representable at any depth, whereas the spectrum itself grows without
bound, with maximal frequency linear and size $\Theta(R^m)$ in $R$. This gives a capacity bound of order
$\sqrt{|K_R|/n}$. Coefficient constraints and parameter counts, not the spectrum alone, determine which functions in
$K_R$ can actually be represented, and characterising them is open.

Our experiments confirm numerically the predicted spectra of the simulated architecture and show that, for single-qubit QDRs, (i) the input scale relative to the model's period has a larger effect on accuracy than depth in the ranges tested,
(ii) the QDR can beat a parameter-matched MLP on smooth or periodic univariate targets and on the spiral classification task, but is worse on targets with a periodic-extension jump unless the scale is reduced, and (iii) on multivariate regression it matches or beats the MLP only for $d\le3$
and is clearly worse for $d\ge4$, with an under-fitting rather than over-fitting failure. These are results for small single-qubit models with limited tuning (QDR scale grid edge-bound on 15 of 27 equations; MLP tuned only over two learning rates) and should not be read as general statements about QDRs.
Future work includes separating coefficient constraints from optimisation difficulty, wider scale and optimiser searches, multi-qubit encodings, and studying the reachable coefficient sets, possibly with residual connections or composing QDRs as units of larger variational models.

\bibliographystyle{splncs04}
\bibliography{main}

\end{document}